\DeclareMathAlphabet{\pazocal}{OMS}{zplm}{m}{n}
\newcommand{\norm}[1]{\left\lVert#1\right\rVert}
\newcommand{\rank}[1]{\mathrm{rank}\left(#1\right)}
\newcommand{\tr}[1]{\mathrm{tr}\left(#1\right)}
\newcommand{\diag}[1]{\mathrm{diag}\left(#1\right)}
\newtheorem{theorem}{Theorem}
\newtheorem{remark}{Remark}
\newtheorem{proposition}{Proposition}
\newtheorem{corollary}{Corollary}
\newtheorem{lemma}{Lemma}
\title{\LARGE \bf
Low-Rank Matrix Regression via Least-Angle Regression
}
\author{Mingzhou Yin and Matthias A. Müller%
\thanks{This work was supported by the Lower Saxony Ministry for Science and Culture within the program zukunft.niedersachsen.}%
\thanks{The authors are with the Institute of Automatic Control, Leibniz University Hannover, 30167 Hannover, Germany (e-mail: yin@irt.uni-hannover.de; mueller@irt.uni-hannover.de).}%
\thanks{© 2025 IEEE. Personal use of this material is permitted. Permission from IEEE must be obtained for all other uses, in any current or future media, including reprinting/republishing this material for advertising or promotional purposes, creating new collective works, for resale or redistribution to servers or lists, or reuse of any copyrighted component of this work in other works.}
}
\begin{document}

\maketitle
\thispagestyle{empty}
\pagestyle{empty}

\begin{abstract}
Low-rank matrix regression is a fundamental problem in data science with various applications in systems and control. Nuclear norm regularization has been widely applied to solve this problem due to its convexity. However, it suffers from high computational complexity and the inability to directly specify the rank. This work introduces a novel framework for low-rank matrix regression that addresses both unstructured and Hankel matrices. By decomposing the low-rank matrix into rank-1 bases, the problem is reformulated as an infinite-dimensional sparse learning problem. The least-angle regression (LAR) algorithm is then employed to solve this problem efficiently. For unstructured matrices, a closed-form LAR solution is derived with equivalence to a normalized nuclear norm regularization problem. For Hankel matrices, a real-valued polynomial basis reformulation enables effective LAR implementation. Two numerical examples in network modeling and system realization demonstrate that the proposed approach significantly outperforms the nuclear norm method in terms of estimation accuracy and computational efficiency.
\end{abstract}

\section{INTRODUCTION}

Following the principle of parsimony, finding low-order structures from data is critical in data science. Such low-order structures can often be interpreted as low-rank model matrices. Thus, data modeling tasks can be posed as estimating unknown matrices from noisy measurements, subject to rank conditions and possibly structural constraints. This problem is referred to as low-rank matrix regression. Low-rank matrix regression has a wide range of applications in systems and control, including system identification \cite{Smith_2014}, realization and model order reduction \cite{Schutter_2000}, and network modeling \cite{Basu_2019}, among others. It is also a central problem in machine learning \cite{Cohen_2015} and computer vision \cite{Yuan_2020}. See \cite{markovsky2012low,Markovsky_2008,Fazel_2013} for an overview.

One of the earliest and most well-known results in this regard is the Eckart-Young-Mirsky (EYM) theorem \cite{Eckart_1936}. It states that for unstructured matrices and when the regressor is an identity matrix, the closed-form solution to the low-rank matrix approximation problem is given by only keeping the most significant singular values corresponding to the rank of the unknown matrix. This method is known as truncated singular value decomposition (SVD). We refer to the identity regressor case as the approximation problem. Beyond the unstructured approximation problem, however, no closed-form solution or convex formulation exists in general.

Different algorithms have been proposed to obtain approximate solutions for low-rank matrix regression, including nonlinear optimization algorithms \cite{Markovsky_2014,Zvonarev_2021,park1999low}, singular spectrum analysis and mixed alternating projections \cite{golyandina2018singular,Cadzow_1988,Zvonarev_2022,Yin_2021,Wang_2019}, and convex relaxation \cite{Smith_2014,Fazel_2001,Fazel_2003}. Among them, this letter particularly focuses on the nuclear norm regularization approach, which provides the tightest convex surrogate for the rank function. This approach has gained popularity due to its versatility and ease of implementation, since the nuclear norm regularization problem can be reformulated as a semidefinite program (SDP) and solved using standard SDP solvers \cite{Fazel_2001}. However, despite being a convex problem, the nuclear norm approach has the following two drawbacks. 1) The computational complexity of SDPs scales unfavorably with the problem size, making it unsuitable for large-scale problems. 2) The rank of the estimates cannot be specified directly but relies on tuning the hyperparameter.

This work aims to improve the nuclear norm approach by presenting a closely related, yet more efficient and effective approach to low-rank matrix regression via least-angle regression (LAR). This approach works for both unstructured and Hankel matrices, and when the matrix is measured under a linear transformation with a general regressor. The Hankel matrix structure is of importance in various fields, including subspace identification \cite{Smith_2014}, behavioral system modeling \cite{markovsky2006exact}, signal processing \cite{Scharf_1991}, and image processing \cite{Kyong_Hwan_Jin_2015}. The proposed approach first reformulates the problem into an infinite-dimensional sparse learning problem by decomposing the low-rank matrix into a linear combination of rank-1 bases. The bases are selected as orthonormal ones for the unstructured case and polynomial ones for the Hankel case.

Then, the LAR algorithm \cite{Efron_2004} is applied to solve the sparse learning problems. LAR is a well-known algorithm in statistics for variable selection by constructing a solution path in the ``least-angle” direction of all active variables. The algorithm has a close connection with $l_1$-norm regularization or lasso, but can be implemented more efficiently with a similar computational complexity to least squares. However, its application to systems and control \cite{Zhang_2015,Chiuso_2012} is relatively limited. The main contribution of this work is as follows. 1) For the unstructured case, a closed-form LAR solution is derived. The solution is shown to be equivalent to a normalized nuclear norm regularization problem, generalizing the relation between LAR and lasso. 2) For the Hankel case, a modified LAR algorithm is proposed by using a real-valued reformulation of the polynomial basis. Two examples in network modeling and system realization are tested numerically. Numerical results demonstrate that the proposed LAR approaches perform significantly better than the nuclear norm solutions in terms of both estimation accuracy and computation time for both unstructured and Hankel matrices.

\textit{Notation.}
The complex conjugate of $z$ is denoted by $z^*$. The imaginary unit is denoted by $j$. The number of elements in a set $A$ is indicated by $\#(A)$. The notation $\mathbf{e}_i^n$ represents the unit vector along the $i$-th coordinate in $\mathbb{R}^n$. For a sequence $\left(x_i\right)_{i=1}^\infty$, the infinite Hankel operator is defined as $\mathcal{H}(x)$ with the $(i,k)$-th element being $x_{i+k-1}$, and the finite Hankel operator of depth $L$ is defined as
\begin{equation*}
    \mathcal{H}_L\left(x_{[m,n]}\right)=\begin{bmatrix}
        x_m&x_{m+1}&\cdots&x_{n-L+1}\\
        x_{m+1}&x_{m+2}&\cdots&x_{n-L+2}\\
        \vdots&\vdots&\ddots&\vdots\\
        x_{m+L-1}&x_{m+L}&\cdots&x_n\\
    \end{bmatrix}.
\end{equation*}

\section{PROBLEM STATEMENT AND BACKGROUND}
Consider the problem of estimating an unknown matrix $X\in\mathbb{M}^{m\times n}\subseteq\mathbb{R}^{m\times n}$ from a noisy linear measurement $Y\in\mathbb{R}^{p\times n}$: $Y = \Phi X+E$, where $\Phi\in\mathbb{R}^{p\times m}$ is the regressor that defines the measurement space, $E\in\mathbb{R}^{p\times n}$ denotes the noise matrix, and $\mathbb{M}^{m\times n}$ represents the matrix structure. We assume that $\Phi$ has full column rank with $\rank{\Phi}=m$, which requires $p\geq m$. The matrix $X$ can have an arbitrary shape with $\bar{n}:=\min(m,n)$. This work considers two types of matrix structures: the unstructured case where $\mathbb{M}^{m\times n}=\mathbb{R}^{m\times n}$ and Hankel matrices. The unknown matrix $X$ is known to have a low rank with $\rank{X}=r$, $r<\bar{n}$.

The estimation problem is usually solved by finding the best rank-$r$ matrix that fits the measurement $Y$:
\begin{equation}
    \hat{X}=\underset{X\in\mathbb{M}^{m\times n}}{\mathrm{arg\,min}}\ \norm{Y-\Phi X}_F^2\ \mathrm{s.t.}\ \rank{X}\leq r.
    \label{eq:slra}
\end{equation}
For the unstructured approximation problem, i.e., $\mathbb{M}^{m\times n}=\mathbb{R}^{m\times n}$ and $\Phi=\mathbb{I}_m$, let $Y=\sum_{i=1}^{\bar{n}}\tilde{\sigma}_i\tilde{\mathbf{u}}_i\tilde{\mathbf{v}}_i^\top$ be the SVD of $Y$, where $\tilde{\sigma}_i$ are the singular values in decreasing order and $\tilde{\mathbf{u}}_i\in\mathbb{R}^m$, $\tilde{\mathbf{v}}_i\in\mathbb{R}^n$ are the left and right singular vectors, respectively. Then, the EYM theorem \cite{Eckart_1936} shows that truncated SVD gives the closed-form solution to \eqref{eq:slra}, i.e., $\hat{X} = \sum_{i=1}^r \tilde{\sigma}_i\tilde{\mathbf{u}}_i\tilde{\mathbf{v}}_i^\top$. Unfortunately, apart from this special case, \eqref{eq:slra} is NP-hard in general.

A widely-used convex surrogate to \eqref{eq:slra} is the nuclear norm regularization, given by
\begin{equation}
    \hat{X}_\mathrm{nuc}=\underset{X\in\mathbb{M}^{m\times n}}{\mathrm{arg\,min}}\ \frac{1}{2}\norm{Y-\Phi X}_F^2+\lambda\norm{X}_*,
    \label{eq:nuc}
\end{equation}
where $\lambda$ is a hyperparameter that controls the rank of $\hat{X}_\mathrm{nuc}$ and $\norm{\cdot}_*$ denotes the nuclear norm, which is defined as the sum of all singular values. However, as discussed in the introduction, this approach is not suitable for large-scale problems, and obtaining an estimate of a specific rank $r$ requires tuning $\lambda$ by trial and error.

\subsection{Examples in Systems and Control}
\label{sec:exam}

This work considers two problems in systems and control as motivating examples.

Example 1 considers a low-rank network modeling problem \cite{Basu_2019}. Let $x_k\in\mathbb{R}^n$ be an $n$-dimensional time series in a network with a first-order vector autoregressive model $x_{k+1}=B^\top x_k+e_k$, where $B\in\mathbb{R}^{n\times n}$ is a low-rank transition matrix of rank $r$. Given $(p+1)$ consecutive observations $(x_k)_{k=1}^{p+1}$, the transition matrix $X=B$ can be estimated by solving the unstructured low-rank regression problem with $m=n$, $Y=\left[x_2\ \dots\ x_{p+1}\right]^\top$, and $\Phi=\left[x_1\ \dots\ x_p\right]^\top$.

Example 2 concerns a system realization problem. Consider a discrete-time, strictly causal, single-input, single-output, linear time-invariant system with an order-$r$ transfer function $G(q)=\sum_{i=1}^\infty g_i q^{-i}$, where $g_i$ is the impulse response. Suppose the impulse response has been estimated with $\hat{g}_k=g_k+e_k$ for $k=1,\dots,m+n-1$. Since the Hankel matrix of the impulse response has rank $r$, the impulse response can be estimated by solving the Hankel low-rank regression problem with $p=m$, $Y=\mathcal{H}_m\left(\hat{g}_{[1,m+n-1]}\right)$, $\Phi=\mathbb{I}_m$, $X=\mathcal{H}_m\left(g_{[1,m+n-1]}\right)$.

\subsection{Low-Rank Matrix Regression by Sparse Learning}
\label{sec:inf}

In this work, we study the problem by reformulating \eqref{eq:slra} into a sparse learning problem. In its standard form, sparse learning finds a sparse solution to a finite-dimensional regression problem by solving
\begin{equation}
    \hat{\sigma}=\underset{\sigma\in\mathbb{R}^{n_\sigma}}{\mathrm{arg\,min}}\ \norm{Y-{\textstyle\sum_{i=1}^{n_\sigma}} \sigma_i\tilde{X}_i}^2\ \mathrm{s.t.}\ \mathrm{card}(\sigma)\leq r,
    \label{eq:sl}
\end{equation}
where $r$ denotes the maximum number of active covariates, $\left(\tilde{X}_i\right)_{i=1}^{n_\sigma}$ denotes all the covariates with $n_\sigma\gg r$ being the number of covariates, and $\mathrm{card}(\cdot)$ denotes the cardinality function, which counts the number of nonzero elements.

We note that low-rank matrix regression can be regarded as finding a sparse combination of independent rank-1 matrices. This idea leads to the following infinite-dimensional sparse learning problem as a prototype:
\begin{equation}
    \begin{matrix}
    \left(\hat{X}_i,\hat{\sigma}_i\right)=&\underset{X_i,\sigma_i}{\mathrm{arg\,min}}&\norm{Y-\Phi\sum_{i=1}^r \sigma_i X_i}_F^2\\
    &\mathrm{s.t.}&X_i=\bar{\mathbf{u}}_i\bar{\mathbf{v}}_i^\top,\ i=1,\dots,r,\\
    &&\rank{\left[\bar{\mathbf{u}}_1\ \cdots\ \bar{\mathbf{u}}_r\right]}=r,\\
    &&\rank{\left[\bar{\mathbf{v}}_1\ \cdots\ \bar{\mathbf{v}}_r\right]}=r,\\
    &&\sum_{i=1}^r \sigma_i X_i\in\mathbb{M}^{m\times n},
    \end{matrix}
    \label{eq:infsl}
\end{equation}
where $\bar{\mathbf{u}}_i\in\mathbb{R}^{m}$, $\bar{\mathbf{v}}_i\in\mathbb{R}^{n}$, and $X_i$ constitutes an independent rank-1 basis. It is easy to see that \eqref{eq:infsl} is equivalent to \eqref{eq:slra}.
\begin{proposition}
    The optimal solution to \eqref{eq:slra} is given by $\hat{X}=\sum_{i=1}^r \hat{\sigma}_i\hat{X}_i$, where $\left(\hat{X}_i,\hat{\sigma}_i\right)$ is the optimal solution to \eqref{eq:infsl}.
    \label{prop:1}
\end{proposition}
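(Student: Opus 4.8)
The plan is to reduce the statement to a set equality. Problem \eqref{eq:infsl} has decision variables $\bar{\mathbf{u}}_i$, $\bar{\mathbf{v}}_i$, $\sigma_i$ (with $X_i=\bar{\mathbf{u}}_i\bar{\mathbf{v}}_i^\top$), and its objective depends on them only through the reconstructed matrix $X:=\sum_{i=1}^r\sigma_i X_i$, at which it equals the objective of \eqref{eq:slra}. Hence it suffices to show that, as the decision variables range over the feasible set of \eqref{eq:infsl}, the matrix $X$ ranges over exactly $\left\{X\in\mathbb{M}^{m\times n}:\rank{X}\leq r\right\}$, the feasible set of \eqref{eq:slra}. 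Given this, the two problems share the same optimal value, and $\hat{X}=\sum_{i=1}^r\hat{\sigma}_i\hat{X}_i$ is optimal for \eqref{eq:slra} whenever $(\hat{X}_i,\hat{\sigma}_i)$ is optimal for \eqref{eq:infsl}.

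For the inclusion ``$\subseteq$'', I would take any feasible point of \eqref{eq:infsl} and collect $U=\left[\bar{\mathbf{u}}_1\ \cdots\ \bar{\mathbf{u}}_r\right]$, $V=\left[\bar{\mathbf{v}}_1\ \cdots\ \bar{\mathbf{v}}_r\right]$, so that $X=U\diag{\sigma}V^\top$. Then $\rank{X}\leq\rank{\diag{\sigma}}\leq r$, and the last constraint of \eqref{eq:infsl} gives $X\in\mathbb{M}^{m\times n}$; hence $X$ is feasible for \eqref{eq:slra}.

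For ``$\supseteq$'', let $X\in\mathbb{M}^{m\times n}$ with $\rank{X}=s\leq r$ and take an SVD $X=\sum_{i=1}^s\sigma_i\mathbf{u}_i\mathbf{v}_i^\top$ retaining only the $s$ positive singular values, with orthonormal $\{\mathbf{u}_i\}$ and $\{\mathbf{v}_i\}$. I would keep $\bar{\mathbf{u}}_i=\mathbf{u}_i$, $\bar{\mathbf{v}}_i=\mathbf{v}_i$ for $i\leq s$, and for $s<i\leq r$ set $\sigma_i=0$ and pick $\bar{\mathbf{u}}_i,\bar{\mathbf{v}}_i$ so that $\left[\bar{\mathbf{u}}_1\ \cdots\ \bar{\mathbf{u}}_r\right]$ and $\left[\bar{\mathbf{v}}_1\ \cdots\ \bar{\mathbf{v}}_r\right]$ have full column rank $r$. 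This is possible precisely because $r<\bar{n}=\min(m,n)$, so each orthonormal system of size $s\leq r$ can be extended to a linearly independent family of size $r$. The vanishing weights give $\sum_{i=1}^r\sigma_i\bar{\mathbf{u}}_i\bar{\mathbf{v}}_i^\top=X$, so the constructed point is feasible for \eqref{eq:infsl} and reconstructs $X$. This establishes the set equality and hence the claim.

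The main obstacle, though a mild one, is the padding step in ``$\supseteq$'': one has to inflate a rank-$s$ SVD to $r$ rank-1 terms with $r-s$ zero coefficients while keeping the two factor matrices of full column rank, and this is exactly where the strict inequality $r<\bar{n}$ enters (a merely low-rank $X$ need not itself supply $r$ independent left and right factors). I would also remark in passing that those same rank conditions force $X_1,\dots,X_r$ to be linearly independent, which is the sense in which they form an ``independent rank-1 basis'', although this is not needed for the equivalence.
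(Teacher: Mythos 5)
Your proof is correct and follows essentially the same route as the paper: both reduce the claim to the equivalence $\rank{X}\leq r \iff X=\bar{U}\Sigma\bar{V}^\top$ with rank-$r$ factors, established via the SVD padded with zero coefficients. You simply spell out the two inclusions (and the role of $r<\bar{n}$ in the padding step) that the paper's one-line proof leaves implicit.
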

\begin{proof}
    The rank condition $\rank{X}\leq r$ is satisfied iff there exist $\Sigma=\diag{\sigma_1,\dots,\sigma_r}$ and rank-$r$ matrices $\bar{U}=\left[\bar{\mathbf{u}}_1\ \cdots\ \bar{\mathbf{u}}_r\right]$, $\bar{V}=\left[\bar{\mathbf{v}}_1\ \cdots\ \bar{\mathbf{v}}_r\right]$, such that $X=\bar{U}\Sigma\bar{V}^\top=\sum_{i=1}^r \sigma_i X_i$. This corresponds to the first three constraints in \eqref{eq:infsl}.
\end{proof}
\begin{remark}
In general, the last constraint in \eqref{eq:infsl} is not equivalent to each selected rank-1 matrix satisfying the structure constraint. However, this property is important for applying LAR on structured matrices and will be shown for a particular type of Hankel matrix later in Section~\ref{sec:4}.
\end{remark}

Problem \eqref{eq:infsl} is very similar to the standard sparse learning problem \eqref{eq:sl} by considering the Frobenius norm and $\tilde{X}_i=\Phi X_i$, except that \eqref{eq:sl} considers a finite set of covariates $\left(\tilde{X}_i\right)_{i=1}^{n_\sigma}$, whereas \eqref{eq:infsl} considers an infinite set of rank-1 covariates. Such a problem is known as an infinite-dimensional sparse learning problem \cite{Yin_2022}.

\subsection{Least-Angle Regression}
\label{sec:LAR}

The sparse learning problem \eqref{eq:sl} is also NP-hard. In this regard, the least-angle regression approach provides a computationally efficient method to find approximate solutions to \eqref{eq:sl}. Unlike the nuclear norm approach, LAR obtains the complete solution path for all values of $r$ in one run. This subsection summarizes the standard procedure of LAR.

The idea of LAR is as follows, with a graphical illustration shown in Fig.~\ref{fig:1} for $p=2$, $n_\sigma=3$. For this demonstration, we assume that $Y,\tilde{X}_i\in\mathbb{R}^p$ are vector-valued, the $l_2$-norm is used in \eqref{eq:sl}, and $\tilde{X}_i$ has been normalized with $\norm{\tilde{X}_i}_2=1$. It starts with zero coefficients $\sigma=\mathbf{0}$ and finds the covariate $\tilde{X}_{i_1}$ ($\tilde{X}_{i_1}=\tilde{X}_3$ in Fig.~\ref{fig:1}) most correlated with the output $\mu_1=Y$, i.e., $i_1=\mathrm{arg\,max}_i\,|\tilde{X}_i^\top Y|$. We take the direction of this covariate $\zeta_1=\tilde{X}_{i_1}$ with a step size of $\eta_1$ until another covariate $\tilde{X}_{i_2}$ ($\tilde{X}_{i_2}=\tilde{X}_1$ in Fig.~\ref{fig:1}) is correlated with the model residual $\mu_2=Y-\eta_1\tilde{X}_{i_1}$ as much as $\tilde{X}_{i_1}$, i.e., $(i_2,\eta_1)=\mathrm{arg\,min}_{i,\eta}\,|\eta|\ \mathrm{s.t.}\ |\tilde{X}_i^\top (Y-\eta\tilde{X}_{i_1})|=|\tilde{X}_{i_1}^\top (Y-\eta\tilde{X}_{i_1})|$. Then the algorithm continues along the bisecting direction between $\tilde{X}_{i_1}$ and $\tilde{X}_{i_2}$, calculated by $\zeta_2=\mathrm{arg\,min}_{\zeta}\,\norm{\zeta}_2\ \mathrm{s.t.}\ \tilde{X}_{i_1}^\top\zeta=\tilde{X}_{i_2}^\top\zeta=1$, until a third covariate $\tilde{X}_{i_3}$ comes in with the same residual correlation as $\tilde{X}_{i_1}$ and $\tilde{X}_{i_2}$. Fig.~\ref{fig:1} ends at this step since the model residual is zero before the remaining covariate $\tilde{X}_2$ can join. If not, LAR proceeds along the ``least-angle direction" equiangular between $\tilde{X}_{i_1}$, $\tilde{X}_{i_2}$, and $\tilde{X}_{i_3}$ until the fourth covariate enters, and so on until all covariates are active or the model residual is zero.

A computationally efficient algorithm of this idea is described in \cite[Section~2]{Efron_2004}, which is summarized in Algorithm~\ref{al:1}, where $\tilde{X}=\left[\tilde{X}_1\ \cdots\ \tilde{X}_{n_\sigma}\right]$. The computational complexity of the algorithm for $k$ steps is $O(k^3+pk^2)$, which is 
the same as solving a least-squares problem with $k$ covariates \cite[Section~7]{Efron_2004}.

\begin{figure}[tb]
    \centering
    \includegraphics[width=0.8\columnwidth]{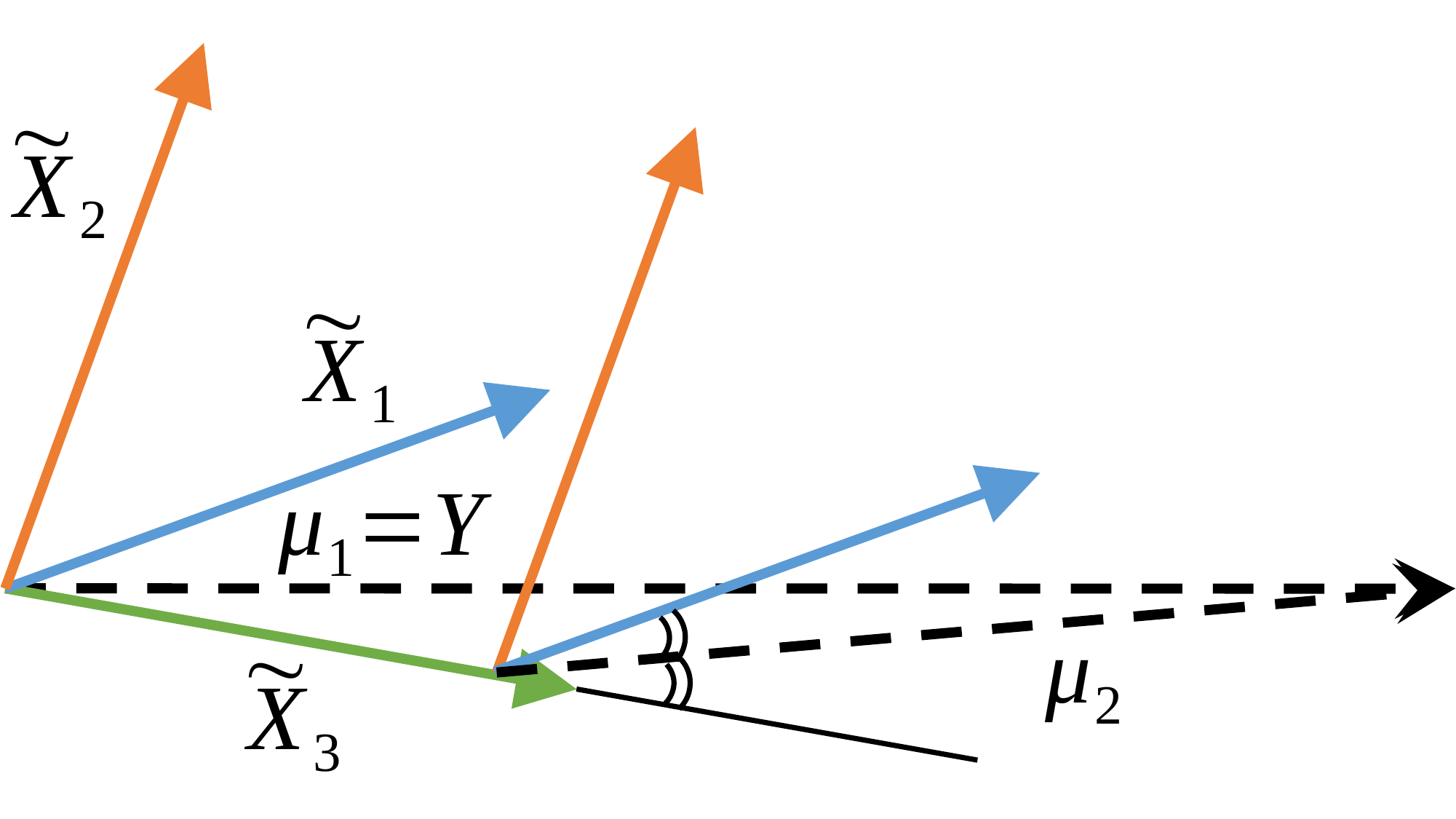}
    \vspace{-0.5em}
    
    \caption{Graphical illustration of least-angle regression. The solution first goes along $\tilde{X}_3$, and then the equiangular direction bisecting $\tilde{X}_3$ and $\tilde{X}_1$.}
    \label{fig:1}
\end{figure}

\begin{algorithm}[tb]
\caption{Standard least-angle regression}
    \begin{algorithmic}[1]
        \State \textbf{Initialization:} active set $\mathcal{A}_1=\mathrm{arg\,max}_k\,|c_k|$, $\mathbf{c}=\tilde{X}^\top Y$, prediction $\hat{Y}_1=\mathbf{0}$, parameter $\hat{\sigma}^1=\mathbf{0}$
        \For{$i=1,2,\dots$}
        \State Correlations on the residual: $\mathbf{c}=\tilde{X}^\top\mu_i$, where $\mu_i=Y-\hat{Y}_i$ is the model residual.
        \State Equiangular direction: $\zeta_i=\tilde{X}_{\mathcal{A}_i}\chi_i$, where
        $$\chi_i=\left(\tilde{X}_{\mathcal{A}_i}^\top\tilde{X}_{\mathcal{A}_i}\right)^{-1}\mathbf{1},\ \tilde{X}_{\mathcal{A}_i}=\left[\cdots\ \mathrm{sgn}(c_k)\tilde{X}_k\ \cdots\right]_{k\in \mathcal{A}_i}.$$
        \State Next covariate: $k_{i+1}=\underset{k\notin\mathcal{A}_i}{\mathrm{arg\,min}^+}\dfrac{\max(|\mathbf{c}|)\pm c_k}{1\pm a_k}$, where $\mathbf{a}=\tilde{X}^\top \zeta_i$ and arg\,min$^+$ indicates minimizing over positive components. Let $\eta_i$ be the minimum value obtained at $k=k_{i+1}$.
        \State $\mathcal{A}_{i+1}=\mathcal{A}_i\cup \left\{k_{i+1}\right\}$, $\hat{Y}_{i+1}=\hat{Y}_i+\eta_i\zeta_i$
        \State $\hat{\sigma}^{i+1}_{\mathcal{A}_i}=\left[\cdots\ \hat{\sigma}^{i+1}_{k}\ \cdots\right]^\top_{k\in \mathcal{A}_i}=\hat{\sigma}^i_{\mathcal{A}_i}+\eta_i\chi_i$
        \State $\hat{\sigma}^{i+1}_k=0$, for all $k\notin \mathcal{A}_i$
        \EndFor
    \end{algorithmic}
    \label{al:1}
\end{algorithm}

LAR is closely related to $l_1$-norm regularization or lasso, i.e., $\hat{\sigma}_\mathrm{lasso}(\lambda)=\mathrm{arg\,min}_{\sigma\in\mathbb{R}^{n_\sigma}}\,\frac{1}{2}\norm{Y-\sum_{i=1}^{n_\sigma} \sigma_i\tilde{X}_i}_2^2+\lambda\norm{\sigma}_1$.
With a minor modification to the algorithm, LAR provides all sparsity changing solutions of lasso, specified by $\mathrm{min}_\lambda\ \lambda$ $\mathrm{s.t.}\ \mathrm{card}\left(\hat{\sigma}_\mathrm{lasso}(\lambda)\right)=r$ for $r=1,2,\dots$ \cite[Section~3.1]{Efron_2004}. In particular, the modification states that whenever $\sigma_i$ flips its sign during the LAR algorithm, the algorithm pauses at $\sigma_i=0$, eliminates $\sigma_i$ from the active set, and continues with the equiangular direction without $\sigma_i$. However, it is not clear in the literature whether lasso solutions are better than pure LAR solutions in any sense. The modification also makes the sparsity non-monotonic along the solution path, which leads to ambiguity in the matrix regression of a particular rank. 

\section{UNSTRUCTURED MATRIX REGRESSION WITH LEAST-ANGLE REGRESSION}

This section generalizes the LAR algorithm presented in Section~\ref{sec:LAR} to the infinite-dimensional problem \eqref{eq:infsl} for the unstructured case where $\mathbb{M}^{m\times n}=\mathbb{R}^{m\times n}$. A closed-form solution is derived with equivalence to a normalized nuclear norm regularization problem.

For the unstructured case, it is natural to follow the idea of SVD and consider an orthonormal basis. Let $\Phi=U_\Phi S_\Phi V_\Phi^\top$ be the SVD of $\Phi$, where $U_\Phi\in\mathbb{R}^{p\times m}$ and $S_\Phi,V_\Phi\in\mathbb{R}^{m\times m}$. Since $\rank{\Phi}=m$, $S_\Phi$ is invertible. Then, we consider a special case of \eqref{eq:infsl} with $S_\Phi V_\Phi^\top X_i$ taking an orthonormal basis:
\begin{equation}
    \begin{matrix}
    \left(\hat{X}_i^\mathrm{u},\hat{\sigma}_i^\mathrm{u}\right)=&\underset{X_i^\mathrm{u},\sigma_i^\mathrm{u}}{\mathrm{arg\,min}}&\norm{Y-U_\Phi\sum_{i=1}^r \sigma_i^\mathrm{u} X_i^\mathrm{u}}_F^2\\
    &\mathrm{s.t.}&X_i^\mathrm{u}\in\mathcal{X}_i,\ i=1,\dots,r,
    \end{matrix}
    \label{eq:unssl}
\end{equation}
where
\begin{equation*}
\resizebox{\hsize}{!}{$
    \mathcal{X}_i=\left\{\mathbf{u}_i\mathbf{v}_i^\top\big|\,\norm{\mathbf{u}_i}_2=\norm{\mathbf{v}_i}_2=1,\mathbf{u}_i^\top\mathbf{u}_k=\mathbf{v}_i^\top\mathbf{v}_k=0,k=1,\dots,i-1\right\}
$}
\end{equation*}
constructs the set of rank-1 orthonormal matrices. Similar to Proposition~\ref{prop:1}, \eqref{eq:unssl} is equivalent to \eqref{eq:slra} for the unstructured case.
\begin{proposition}
    The optimal solution to \eqref{eq:slra} is given by $\hat{X}=V_\Phi S_\Phi^{-1}\sum_{i=1}^r \hat{\sigma}_i^\mathrm{u} \hat{X}_i^\mathrm{u}$, where $\left(\hat{X}_i^\mathrm{u},\hat{\sigma}_i^\mathrm{u}\right)$ is the optimal solution to \eqref{eq:unssl}, when $\mathbb{M}^{m\times n}=\mathbb{R}^{m\times n}$.
\end{proposition}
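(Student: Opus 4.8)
The plan is to reduce \eqref{eq:unssl} to the reduced rank problem obtained from \eqref{eq:slra} by an invertible linear change of variables, in the same spirit as the proof of Proposition~\ref{prop:1}. First I would substitute $Z = S_\Phi V_\Phi^\top X$, equivalently $X = V_\Phi S_\Phi^{-1} Z$. Since $V_\Phi$ is orthogonal and $S_\Phi$ is invertible, $Z\mapsto V_\Phi S_\Phi^{-1}Z$ is an invertible linear map on $\mathbb{R}^{m\times n}$, hence $\rank{X}=\rank{Z}$. Using $\Phi=U_\Phi S_\Phi V_\Phi^\top$ together with $V_\Phi^\top V_\Phi=\mathbb{I}_m$ and $S_\Phi S_\Phi^{-1}=\mathbb{I}_m$, one gets $\Phi X = U_\Phi Z$, so \eqref{eq:slra} for $\mathbb{M}^{m\times n}=\mathbb{R}^{m\times n}$ is equivalent to $\hat{Z}=\mathrm{arg\,min}_{Z}\,\norm{Y-U_\Phi Z}_F^2$ subject to $\rank{Z}\leq r$, with $\hat{X}=V_\Phi S_\Phi^{-1}\hat{Z}$.

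Next I would show that, as $X_i^\mathrm{u}$ ranges over $\mathcal{X}_i$ and $\sigma_i^\mathrm{u}$ over $\mathbb{R}$, the matrix $Z=\sum_{i=1}^r\sigma_i^\mathrm{u}X_i^\mathrm{u}$ ranges over exactly the matrices of rank at most $r$. One inclusion is immediate: writing $X_i^\mathrm{u}=\mathbf{u}_i\mathbf{v}_i^\top$ with $\{\mathbf{u}_i\}$ and $\{\mathbf{v}_i\}$ orthonormal, we have $Z=U\diag{\sigma_1^\mathrm{u},\dots,\sigma_r^\mathrm{u}}V^\top$ with $U=[\mathbf{u}_1\ \cdots\ \mathbf{u}_r]$ and $V=[\mathbf{v}_1\ \cdots\ \mathbf{v}_r]$, so $\rank{Z}\leq r$. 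For the converse, take any $Z$ with $\rank{Z}=r'\leq r$, use its thin SVD $Z=\sum_{i=1}^{r'}s_i\mathbf{a}_i\mathbf{b}_i^\top$ with orthonormal singular vectors, complete $\{\mathbf{a}_i\}_{i=1}^{r'}$ and $\{\mathbf{b}_i\}_{i=1}^{r'}$ to orthonormal families of size $r$, and set $\sigma_i^\mathrm{u}=s_i$ for $i\leq r'$ and $\sigma_i^\mathrm{u}=0$ otherwise; this is a feasible point of \eqref{eq:unssl} realizing $Z$. Therefore the sets of attainable $Z$ in \eqref{eq:unssl} and in the reduced problem coincide.

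Finally, since the objective of \eqref{eq:unssl} equals $\norm{Y-U_\Phi Z}_F^2$ evaluated at $Z=\sum_{i=1}^r\sigma_i^\mathrm{u}X_i^\mathrm{u}$, the two problems have the same optimal value and their optimizers correspond: $\hat{Z}=\sum_{i=1}^r\hat{\sigma}_i^\mathrm{u}\hat{X}_i^\mathrm{u}$, whence $\hat{X}=V_\Phi S_\Phi^{-1}\sum_{i=1}^r\hat{\sigma}_i^\mathrm{u}\hat{X}_i^\mathrm{u}$. I expect the only genuine subtlety to be the second step, namely verifying that requiring the rank-1 bases to be orthonormal (rather than merely independent, as in \eqref{eq:infsl}) does not shrink the reachable set of rank-$\leq r$ matrices; the SVD-plus-orthonormal-completion argument settles this, and it is precisely the equivalence used in Proposition~\ref{prop:1} specialized to an orthonormal parametrization.
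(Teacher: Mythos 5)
Your proposal is correct and follows essentially the same route as the paper: the invertible change of variables $Z=S_\Phi V_\Phi^\top X$ (so $\Phi X=U_\Phi Z$ and ranks are preserved), combined with the SVD characterization of rank-$\leq r$ matrices as sums of $r$ orthonormal rank-1 terms. The only difference is that you spell out the orthonormal-completion step for the case $\rank{Z}<r$, which the paper's one-line proof leaves implicit.
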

\begin{proof}
    The rank condition $\rank{X}\leq r$ is satisfied iff $S_\Phi V_\Phi^\top X$ has no more than $r$ non-zero singular values, i.e., $S_\Phi V_\Phi^\top X=\sum_{i=1}^r \sigma_i^\mathrm{u}\mathbf{u}_i\mathbf{v}_i^\top$, where $\mathbf{u}_i$ and $\mathbf{v}_i$ are orthonormal. This corresponds to the constraint in \eqref{eq:unssl}.
\end{proof}

Due to the favorable property of the orthonormality, the LAR solution to \eqref{eq:unssl} can be calculated in closed form, as shown in the following theorem.
\begin{theorem}
    Let $U_\Phi^\top Y=U^\mathrm{u}S^0(V^\mathrm{u})^{\!\top}$ be the SVD of $U_\Phi^\top Y$, where $U^\mathrm{u}=\left[\hat{\mathbf{u}}_1\ \cdots\ \hat{\mathbf{u}}_m\right]\in\mathbb{R}^{m\times m}$, $S^0\in\mathbb{R}^{m\times n}$, and $V^\mathrm{u}=\left[\hat{\mathbf{v}}_1\ \cdots\ \hat{\mathbf{v}}_n\right]\in\mathbb{R}^{n\times n}$. Let the $(i,i)$-th element of $S^0$ be $\sigma^0_i$. The LAR solution to \eqref{eq:unssl} is given by $\hat{X}_i^\mathrm{u}=\hat{\mathbf{u}}_i\hat{\mathbf{v}}_i^\top$ and $\hat{\sigma}_i^\mathrm{u}=\sigma^0_i-\sigma^0_{r+1}$.
    \label{thm:1}
\end{theorem}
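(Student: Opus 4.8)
The plan is to run Algorithm~\ref{al:1} on \eqref{eq:unssl}, viewed as a sparse learning problem whose (infinite, constraint-coupled) dictionary consists of the rank-$1$ matrices $U_\Phi\mathbf{u}\mathbf{v}^\top$, whose output is $Y$, and in which the Frobenius inner product $\langle A,B\rangle=\tr{A^\top B}$ plays the role of the Euclidean one (equivalently, vectorize throughout). Two elementary facts set the stage: since $U_\Phi^\top U_\Phi=\mathbb{I}_m$, every covariate is already normalized, $\norm{U_\Phi\mathbf{u}\mathbf{v}^\top}_F^2=(\mathbf{u}^\top\mathbf{u})(\mathbf{v}^\top\mathbf{v})=1$; and the correlation of $U_\Phi\mathbf{u}\mathbf{v}^\top$ with any residual matrix $R$ equals $\langle U_\Phi\mathbf{u}\mathbf{v}^\top,R\rangle=\mathbf{u}^\top U_\Phi^\top R\,\mathbf{v}$. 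In particular the first LAR step maximizes $|\mathbf{u}^\top U_\Phi^\top Y\,\mathbf{v}|$ over unit vectors, which is the variational characterization of $\sigma^0_1$; it therefore picks the leading singular pair of $U_\Phi^\top Y$, so the first active covariate is $\hat{\mathbf{u}}_1\hat{\mathbf{v}}_1^\top$ with correlation $\sigma^0_1$.

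The core is an induction showing that, for $i=1,2,\dots$, the $i$-th covariate entering the active set is $\hat{\mathbf{u}}_i\hat{\mathbf{v}}_i^\top$ (with its sign fixed so that its correlation is $+\sigma^0_i$), that it enters exactly when the residual-correlation common to the already-active covariates has dropped to $\sigma^0_i$, and that the $i$-th LAR step proceeds along $\zeta_i=U_\Phi\sum_{k=1}^i\hat{\mathbf{u}}_k\hat{\mathbf{v}}_k^\top$ with step size $\eta_i=\sigma^0_i-\sigma^0_{i+1}$, after which the common active correlation equals $\sigma^0_{i+1}$. The inductive step rests on two observations. (i) The active covariates are mutually Frobenius-orthonormal, $\langle U_\Phi\hat{\mathbf{u}}_k\hat{\mathbf{v}}_k^\top,U_\Phi\hat{\mathbf{u}}_l\hat{\mathbf{v}}_l^\top\rangle=(\hat{\mathbf{u}}_k^\top\hat{\mathbf{u}}_l)(\hat{\mathbf{v}}_k^\top\hat{\mathbf{v}}_l)=\delta_{kl}$, so the Gram matrix in Algorithm~\ref{al:1} is the identity, $\chi_i=\mathbf{1}$, $\zeta_i=U_\Phi\sum_{k=1}^i\hat{\mathbf{u}}_k\hat{\mathbf{v}}_k^\top$, and $a_k=1$ on the active set, so the active correlations stay equal and shrink at unit rate. (ii) By the definition of $\mathcal{X}_{i+1}$, any admissible incoming covariate $U_\Phi\mathbf{u}\mathbf{v}^\top$ has $\mathbf{u}$ orthogonal to $\hat{\mathbf{u}}_1,\dots,\hat{\mathbf{u}}_i$ and $\mathbf{v}$ orthogonal to $\hat{\mathbf{v}}_1,\dots,\hat{\mathbf{v}}_i$, hence it is Frobenius-orthogonal to every active covariate and to $\zeta_i$; thus $a=0$ for it and its residual-correlation is the \emph{constant} $\mathbf{u}^\top U_\Phi^\top Y\,\mathbf{v}$ throughout the step, whose maximum over $\mathcal{X}_{i+1}$ equals $\sigma^0_{i+1}$ by the variational characterization of singular values restricted to the relevant orthogonal complements, attained at $(\hat{\mathbf{u}}_{i+1},\hat{\mathbf{v}}_{i+1})$.

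Feeding (i)--(ii) into the next-covariate rule of Algorithm~\ref{al:1}, with $\max|\mathbf{c}|=\sigma^0_i$, $c_{k_{i+1}}=\pm\sigma^0_{i+1}$ and $a_{k_{i+1}}=0$, the minimizing value is the step size $\eta_i=\sigma^0_i-\sigma^0_{i+1}\ge 0$, the entering covariate is $\pm\hat{\mathbf{u}}_{i+1}\hat{\mathbf{v}}_{i+1}^\top$ (used as $\hat{\mathbf{u}}_{i+1}\hat{\mathbf{v}}_{i+1}^\top$ after the sign normalization), and the common active correlation drops to $\sigma^0_i-\eta_i=\sigma^0_{i+1}$, which closes the induction. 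Since $\chi_i=\mathbf{1}$, the coefficient of the covariate $\hat{\mathbf{u}}_k\hat{\mathbf{v}}_k^\top$ is incremented by $\eta_l$ at every step $l\ge k$; hence once $r$ covariates have entered its value is $\sum_{l=k}^r\eta_l=\sum_{l=k}^r(\sigma^0_l-\sigma^0_{l+1})=\sigma^0_k-\sigma^0_{r+1}$. Terminating the path at this point (the $r$-term solution sought in \eqref{eq:unssl}) yields $\hat X_i^\mathrm{u}=\hat{\mathbf{u}}_i\hat{\mathbf{v}}_i^\top$ and $\hat\sigma^\mathrm{u}_i=\sigma^0_i-\sigma^0_{r+1}$.

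The only genuinely delicate point --- and the one I expect to require the most care --- is transferring LAR, designed for a finite dictionary, to the infinite and constraint-coupled dictionary of \eqref{eq:unssl}: one must argue that searching for the next covariate legitimately reduces to the finite-dimensional problem $\max\{\mathbf{u}^\top U_\Phi^\top Y\,\mathbf{v}:\mathbf{u}\mathbf{v}^\top\in\mathcal{X}_{i+1}\}$ and that doing so does not disturb the equiangular bookkeeping of the finite algorithm. This works exactly because the orthogonality constraints defining $\mathcal{X}_{i+1}$ decouple the candidate correlations from the moves already taken, keeping them constant along each step, as in observation (ii). Minor points are the sign convention (the rank-$1$ bases $\hat X_i^\mathrm{u}$ are defined only up to the usual sign ambiguity of singular vectors, fixed here so that all correlations and coefficients are nonnegative) and degenerate spectra with repeated singular values, where several covariates enter simultaneously with $\eta_i=0$; neither affects the stated closed form.
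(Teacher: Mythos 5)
Your proposal is correct and follows essentially the same route as the paper's own proof: an induction in which the Frobenius-orthonormality of the active covariates makes the equiangular direction $\zeta_i=U_\Phi\sum_{k\le i}\hat{\mathbf{u}}_k\hat{\mathbf{v}}_k^\top$, the variational characterization of singular values identifies each entering covariate as the next singular pair of $U_\Phi^\top Y$, and the step sizes $\eta_i=\sigma^0_i-\sigma^0_{i+1}$ telescope to $\hat{\sigma}^\mathrm{u}_i=\sigma^0_i-\sigma^0_{r+1}$. Your observations (i)--(ii) --- that the Gram matrix is the identity and that candidate correlations remain constant along each step because of the orthogonality constraints in $\mathcal{X}_{i+1}$ --- make explicit the bookkeeping the paper treats more tersely, but they do not constitute a different argument.
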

\begin{proof}
We prove the theorem by induction. Similar to Section~\ref{sec:LAR}, we start with an empty model and find $X_1^\mathrm{u}\in\mathcal{X}_1$ that maximizes the correlation between $U_\Phi X_1^\mathrm{u}$ and the model residual $\mu_1=Y$ in terms of the Frobenius inner product:
\begin{equation*}
    \underset{X_1^\mathrm{u}\in\mathcal{X}_1}{\mathrm{arg\,max}}\ \frac{\left<U_\Phi X_1^\mathrm{u},Y\right>_F}{\norm{U_\Phi X_1^\mathrm{u}}_F\norm{Y}_F}=\underset{X_1^\mathrm{u}\in\mathcal{X}_1}{\mathrm{arg\,max}}\,\left<U_\Phi X_1^\mathrm{u},Y\right>_F,
\end{equation*}
since $\norm{U_\Phi X_1^\mathrm{u}}_F=1$. From the definition of $\mathcal{X}_1$, the problem is equivalent to
\begin{equation*}
\resizebox{\hsize}{!}{$
    \underset{\norm{\mathbf{u}_1}_2=\norm{\mathbf{v}_1}_2=1}{\mathrm{arg\,max}}\tr{\mathbf{v}_1\mathbf{u}_1^\top U_\Phi^\top Y}=\underset{\norm{\mathbf{u}_1}_2=\norm{\mathbf{v}_1}_2=1}{\mathrm{arg\,max}}\mathbf{u}_1^\top U_\Phi^\top Y\mathbf{v}_1
    =\left(\hat{\mathbf{u}}_1,\hat{\mathbf{v}}_1\right).
$}
\end{equation*}

In the first iteration, we go along the direction of $\zeta_1=U_\Phi X_1^\mathrm{u}$ and find the step size $\eta_1$ along $\zeta_1$ such that the new model residual $\mu_2=\mu_1-\eta_1 U_\Phi X_1^\mathrm{u}$ correlates with $U_\Phi X_1^\mathrm{u}$ as much as a new covariate $U_\Phi X_2^\mathrm{u}$. Note that $U_\Phi^\top\mu_2=U_\Phi^\top\mu_1-\eta_1X_1^\mathrm{u}=U^\mathrm{u}S^1(V^\mathrm{u})^{\!\top}$, where $S^1=S^0-\eta_1\mathbf{e}_1^m\left(\mathbf{e}_1^n\right)^\top$. We have $\left<U_\Phi X_1^\mathrm{u},\mu_2\right>_F=\sigma^0_1-\eta_1$. So $\eta_1$ is selected such that
\begin{align*}
\begin{matrix}
    \sigma^0_1-\eta_1=\underset{X_2^\mathrm{u}\in\mathcal{X}_2}{\mathrm{max}}\left<U_\Phi X_2^\mathrm{u},\mu_2\right>_F=\!\!\!\!\!\!&
    \underset{\mathbf{u}_2,\mathbf{v}_2}{\mathrm{max}}&\!\!\!\!\!\!\mathbf{u}_2^\top U^\mathrm{u}S^1(V^\mathrm{u})^{\!\top}\mathbf{v}_2\\
    &\mathrm{s.t.}&\!\!\!\!\!\!\norm{\mathbf{u}_2}_2=\norm{\mathbf{v}_2}_2=1,\\
    &&\!\!\!\!\!\!\mathbf{u}_2^\top\mathbf{u}_1=\mathbf{v}_2^\top\mathbf{v}_1=0.
\end{matrix}
\end{align*}%
Since $\mathbf{u}_2$, $\mathbf{v}_2$ are perpendicular to $\hat{\mathbf{u}}_1$, $\hat{\mathbf{v}}_1$, respectively, the maximum of the right-hand side is $\sigma^0_2$, obtained when $\mathbf{u}_2=\hat{\mathbf{u}}_2$ and $\mathbf{v}_2=\hat{\mathbf{v}}_2$. This leads to $\eta_1=\sigma^0_1-\sigma^0_2$.

Suppose $\mathbf{u}_i=\hat{\mathbf{u}}_i$, $\mathbf{v}_i=\hat{\mathbf{v}}_i$ for all $i=1,\dots,k$ and $\eta_i=\sigma^0_i-\sigma^0_{i+1}$ for all $i=1,\dots,k-1$. Since all $X_i^\mathrm{u}$'s are perpendicular, the equiangular direction at the $k$-th iteration is given by $\zeta_k=U_\Phi \sum_{i=1}^k X_i^\mathrm{u}$. Following a similar procedure as the first iteration, we have $\mu_{k+1}=\mu_k-\eta_k\zeta_k$, $S^k=S^{k-1}-\eta_k\sum_{i=1}^k\mathbf{e}_i^m\left(\mathbf{e}_i^n\right)^\top$, and
\begin{equation*}
\begin{matrix}
    \sigma^0_k-\eta_k&=&\underset{\mathbf{u}_{k+1},\mathbf{v}_{k+1}}{\mathrm{max}}&\mathbf{u}_{k+1}^\top U^\mathrm{u}S^k(V^\mathrm{u})^{\!\top}\mathbf{v}_{k+1}\\
    &&\mathrm{s.t.}&\norm{\mathbf{u}_{k+1}}_2=\norm{\mathbf{v}_{k+1}}_2=1,\\
    &&&\mathbf{u}_{k+1}^\top\mathbf{u}_i=\mathbf{v}_{k+1}^\top\mathbf{v}_i=0,\,i=1,\dots,k.
\end{matrix}
\end{equation*}
This leads to $\mathbf{u}_{k+1}=\hat{\mathbf{u}}_{k+1}$, $\mathbf{v}_{k+1}=\hat{\mathbf{v}}_{k+1}$ and $\eta_k=\sigma^0_k-\sigma^0_{k+1}$.

Finally, we note $\hat{\sigma}_i^\mathrm{u}=\sum_{k=i}^r \eta_k=\sigma^0_i-\sigma^0_{r+1}$, which completes the proof.
\end{proof}

Therefore, the LAR algorithm gives the following rank-$r$ solution for unstructured low-rank matrix regression:
\begin{equation}
    \hat{X}_\mathrm{LAR}=V_\Phi S_\Phi^{-1}\sum_{i=1}^r \left(\sigma^0_i-\sigma^0_{r+1}\right)\hat{\mathbf{u}}_i\hat{\mathbf{v}}_i^\top.
    \label{eq:xlar}
\end{equation}

It has been well-known that for the unstructured approximation case with $\Phi=\mathbb{I}_m$, \eqref{eq:nuc} admits a closed-form solution \cite[Theorem~2.1]{Cai_2010}: $\hat{X}_\mathrm{nuc}=\sum_{i=1}^{\bar{n}}\max{\left(\tilde{\sigma}_i-\lambda,0\right)}\tilde{\mathbf{u}}_i\tilde{\mathbf{v}}_i^\top$. Using the close relation between LAR and lasso, this result can be generalized to regression problems with a general $\Phi$.
\begin{corollary}
Consider the normalized nuclear norm regularization problem:
\begin{equation}
    \hat{X}_\mathrm{nuc,n}(\lambda)=\underset{X\in\mathbb{R}^{m\times n}}{\mathrm{arg\,min}}\ \frac{1}{2}\norm{Y-\Phi X}_F^2+\lambda\norm{S_\Phi V_\Phi^\top X}_*.
    \label{eq:nucnorm}
\end{equation}
The best rank-$r$ solution, i.e., $\mathrm{min}_\lambda\ \lambda\ \mathrm{s.t.}\ \rank{\hat{X}_\mathrm{nuc,n}(\lambda)}=r$ is given by $\hat{X}_\mathrm{LAR}$ in \eqref{eq:xlar}.
\end{corollary}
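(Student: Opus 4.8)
The plan is to prove the corollary by a change of variables that turns \eqref{eq:nucnorm} into the \emph{standard} singular-value soft-thresholding problem, for which the closed-form minimizer is already available in the literature, and then to read off the smallest regularization weight that produces a rank-$r$ estimate.

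First I would substitute $Z:=S_\Phi V_\Phi^\top X$. Since $S_\Phi$ is invertible and $V_\Phi$ is orthogonal, this is a bijection on $\mathbb{R}^{m\times n}$ with inverse $X=V_\Phi S_\Phi^{-1}Z$, and it maps $\rank{X}$ to $\rank{Z}$. Using $\Phi=U_\Phi S_\Phi V_\Phi^\top$ and $V_\Phi^\top V_\Phi=\mathbb{I}_m$ gives $\Phi X=U_\Phi Z$ and $\norm{S_\Phi V_\Phi^\top X}_*=\norm{Z}_*$, so \eqref{eq:nucnorm} becomes $\min_Z\,\tfrac12\norm{Y-U_\Phi Z}_F^2+\lambda\norm{Z}_*$. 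Next, I would complete the orthonormal columns of $U_\Phi$ to a $p\times p$ orthogonal matrix $[U_\Phi\ U_\Phi^\perp]$; left-multiplying the residual by this matrix preserves the Frobenius norm and splits the data term as $\norm{U_\Phi^\top Y-Z}_F^2+\norm{(U_\Phi^\perp)^\top Y}_F^2$, with the second term constant in $Z$. Hence the minimizer solves $\min_Z\,\tfrac12\norm{U_\Phi^\top Y-Z}_F^2+\lambda\norm{Z}_*$, whose unique solution is, by \cite[Theorem~2.1]{Cai_2010} applied to the data matrix $U_\Phi^\top Y=U^\mathrm{u}S^0(V^\mathrm{u})^{\!\top}$, the singular-value soft-thresholded matrix $\hat{Z}(\lambda)=\sum_{i=1}^{\bar n}\max(\sigma^0_i-\lambda,0)\,\hat{\mathbf{u}}_i\hat{\mathbf{v}}_i^\top$. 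Mapping back, $\hat{X}_\mathrm{nuc,n}(\lambda)=V_\Phi S_\Phi^{-1}\hat{Z}(\lambda)$, and since $V_\Phi S_\Phi^{-1}$ is invertible, $\rank{\hat{X}_\mathrm{nuc,n}(\lambda)}=\rank{\hat{Z}(\lambda)}=\#\{i:\sigma^0_i>\lambda\}$, a nonincreasing step function of $\lambda$. Therefore $\rank{\hat{X}_\mathrm{nuc,n}(\lambda)}=r$ on the interval $[\sigma^0_{r+1},\sigma^0_r)$, its smallest point is $\lambda^\star=\sigma^0_{r+1}$, and substituting yields $\hat{X}_\mathrm{nuc,n}(\sigma^0_{r+1})=V_\Phi S_\Phi^{-1}\sum_{i=1}^r(\sigma^0_i-\sigma^0_{r+1})\hat{\mathbf{u}}_i\hat{\mathbf{v}}_i^\top=\hat{X}_\mathrm{LAR}$ from \eqref{eq:xlar}.

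Most of this is routine; the only point I expect to need care is the rank bookkeeping at the breakpoint $\lambda=\sigma^0_{r+1}$: one must verify that the $(r{+}1)$-th thresholded singular value is exactly zero there while the first $r$ remain strictly positive, which needs $\sigma^0_r>\sigma^0_{r+1}$ (otherwise no $\lambda$ gives rank exactly $r$, and the claim is either vacuous or should carry a generic-data caveat). I would also note that this result is exactly what the LAR--lasso relation predicts: Theorem~\ref{thm:1} shows the LAR coefficients $\sigma^0_i-\sigma^0_{r+1}\ge 0$ never change sign, so the pure-LAR path coincides with the lasso-modified path, and the reduced problem $\min_Z\,\tfrac12\norm{U_\Phi^\top Y-Z}_F^2+\lambda\norm{Z}_*$ is precisely the lasso problem associated with \eqref{eq:unssl} written in the orthonormal rank-1 basis, so the nuclear norm solution path and the LAR solution path agree at every rank-changing step.
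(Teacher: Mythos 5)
Your proof is correct, but it takes a genuinely different route from the paper's. The paper argues via the LAR--lasso correspondence: it identifies \eqref{eq:nucnorm} with the lasso version of the orthonormal-basis problem \eqref{eq:unssl}, observes from the proof of Theorem~\ref{thm:1} that the active coefficients grow monotonically so the lasso sign-flip modification is never triggered, and then invokes \cite[Theorem~1]{Efron_2004} to conclude that the pure LAR path coincides with the lasso (hence normalized nuclear norm) path at every sparsity-changing value of $\lambda$. You instead bypass LAR entirely: the substitution $Z=S_\Phi V_\Phi^\top X$ together with the orthogonal completion of $U_\Phi$ reduces \eqref{eq:nucnorm} to the standard singular-value soft-thresholding problem, whose closed-form solution from \cite[Theorem~2.1]{Cai_2010} lets you compute $\hat{X}_\mathrm{nuc,n}(\lambda)$ explicitly, track its rank as a function of $\lambda$, and evaluate at $\lambda^\star=\sigma^0_{r+1}$ to recover \eqref{eq:xlar}. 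Your route is more self-contained and elementary --- it does not rest on the somewhat delicate identification of an infinite-dimensional matrix-valued sparse learning problem with the finite-dimensional lasso setting of \cite{Efron_2004} --- and it makes fully explicit where the entire solution path lives, not just the rank-changing breakpoints. What it gives up is the conceptual point the paper is making, namely that this corollary is an instance of the general LAR--lasso equivalence (your closing remark recovers that connection, but as a consequence rather than as the proof mechanism). You also correctly flag the degeneracy $\sigma^0_r=\sigma^0_{r+1}$, under which no $\lambda$ yields rank exactly $r$; the paper's proof silently assumes this does not occur, so your caveat is a genuine (minor) improvement in rigor.
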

\begin{proof}
    Note that \eqref{eq:nucnorm} is equivalent to the lasso version of \eqref{eq:unssl}: $\mathrm{min}_{\left(X_i^\mathrm{u},\sigma_i^\mathrm{u}\right)}\ \frac{1}{2}\norm{Y-U_\Phi\sum_{i=1}^{\bar{n}} \sigma_i^\mathrm{u} X_i^\mathrm{u}}_F^2 + \lambda\sum_{i=1}^{\bar{n}} \sigma_i^\mathrm{u}\ \mathrm{s.t.}\ X_i^\mathrm{u}\in\mathcal{X}_i,\ i=1,\dots,\bar{n}.$
    As can be seen from the proof of Theorem~\ref{thm:1}, the coefficients of the active covariates are monotonically increasing, so the lasso modification in \cite[Section~3.1]{Efron_2004} cannot be triggered. Therefore, \cite[Theorem~1]{Efron_2004} completes the proof.
\end{proof}

\section{HANKEL MATRIX REGRESSION WITH LEAST-ANGLE REGRESSION}
\label{sec:4}

This section further considers the Hankel structure constraint in \eqref{eq:infsl} by employing a basis of rank-1 Hankel matrices $X_i$ and proposes a modified LAR algorithm for Hankel matrix regression. Unlike the unstructured case, orthonormal bases cannot be employed since the SVD of a Hankel matrix is not Hankel. Instead, complex polynomial bases are considered. With a slight abuse of notation, define $\bar{\mathbf{u}}_z=\left[1\ z\ \dots\ z^{m-1}\right]^\top,\ \bar{\mathbf{v}}_z=\left[1\ z\ \dots\ z^{n-1}\right]^\top,\ X_z=\bar{\mathbf{u}}_z\bar{\mathbf{v}}_z^\top$, where $z\in\mathbb{C}$. We would like to show that any rank-$r$ Hankel matrix $X$ can be expressed as $X=\sum_{i=1}^r \sigma_{z_i}X_{z_i}$, where $\sigma_{z_i}\in\mathbb{C}$. However, this statement is not true. A trivial counterexample is $X=\begin{bmatrix}
    0&0\\0&1
\end{bmatrix}$, which has rank 1 but cannot be expressed as $\bar{\mathbf{u}}_z\bar{\mathbf{v}}_z^\top$. Thus, this section focuses on a particular type of Hankel matrices that can be decomposed with polynomial bases, specified by the following lemmas.
\begin{lemma}
    An infinite Hankel matrix $\mathcal{H}(x)$ satisfies $\rank{\mathcal{H}(x)}\leq r$ iff $\left(x_k\right)_{k=1}^\infty$ satisfies the homogeneous linear difference equation $x_{k+r}+\sum_{i=0}^{r-1} a_i x_{k+i}=0$ for all $k\geq 1$.
    \label{lm:1}
\end{lemma}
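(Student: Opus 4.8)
The plan is to prove the equivalence by exploiting the defining feature of the Hankel operator: its columns are successive shifts of the sequence $(x_k)_{k=1}^\infty$. Write $c_j=(x_j,x_{j+1},x_{j+2},\dots)^\top$ for the $j$-th column of $\mathcal{H}(x)$ (the $(i,j)$-th entry being $x_{i+j-1}$), so that $c_{j+1}$ is obtained from $c_j$ by deleting the first entry; in operator form $c_{j+1}=Sc_j$, where $S$ is the one-step left shift acting on infinite sequences. This identity is what carries a single column relation to all indices. (This statement is essentially Kronecker's lemma on finite-rank infinite Hankel matrices.)

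For the ``if'' direction, suppose the recurrence $x_{k+r}+\sum_{i=0}^{r-1}a_ix_{k+i}=0$ holds for all $k\geq1$. Reading it off along columns, it says exactly that $c_{k+r}=-\sum_{i=0}^{r-1}a_ic_{k+i}$ for every $k\geq1$. A straightforward induction on $j$ then shows every column $c_j$ with $j>r$ lies in $\mathrm{span}(c_1,\dots,c_r)$, so the column space of $\mathcal{H}(x)$ is spanned by its first $r$ columns and $\rank{\mathcal{H}(x)}\leq r$.

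For the ``only if'' direction, assume $\rank{\mathcal{H}(x)}\leq r$; if $x\equiv0$ any recurrence works, so assume otherwise. Since the first $r+1$ columns are linearly dependent, let $d$ be the smallest integer with $c_1,\dots,c_{d+1}$ linearly dependent; then $1\leq d\leq r$, and by minimality $c_1,\dots,c_d$ are linearly independent, so $c_{d+1}=\sum_{i=1}^d b_ic_i$ for unique scalars $b_i$. Applying the shift $S$ repeatedly and using linearity gives $c_{d+m}=\sum_{i=1}^d b_ic_{i+m-1}$ for all $m\geq1$; comparing first entries yields $x_{k+d}=\sum_{j=0}^{d-1}b_{j+1}x_{k+j}$ for all $k\geq1$, an order-$d$ homogeneous linear difference equation. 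If $d=r$, relabel $a_j=-b_{j+1}$. If $d<r$, replacing $k$ by $k+(r-d)$ (still valid for all $k\geq1$ since $d\leq r$) shows the sequence also satisfies the order-$r$ equation with $a_i=0$ for $0\leq i<r-d$ and $a_{r-d+j}=-b_{j+1}$ for $0\leq j\leq d-1$.

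I expect the only delicate point to be the choice of $d$ in the ``only if'' direction: $d$ must be taken minimal so that the dependency genuinely expresses a \emph{later} column in terms of \emph{earlier} ones (an arbitrary rank-deficiency relation among $c_1,\dots,c_{r+1}$ need not have this triangular form), and one must verify that shift invariance propagates this single relation to all indices. The ``if'' direction and the padding from order $d$ to order $r$ are routine.
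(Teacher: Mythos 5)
Your proof is correct and follows essentially the same route as the paper's: both use the shift structure of the Hankel columns to propagate a single dependency relation $c_{d+1}\in\mathrm{span}(c_1,\dots,c_d)$ to all later columns. Your treatment of the ``only if'' direction via the minimal index $d$ is in fact more careful than the paper's, which asserts without detail that $\rank{\mathcal{H}(x)}\leq r$ forces $s_{r+1}$ to lie in the span of $s_1,\dots,s_r$ — precisely the triangularity point you flag.
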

\begin{proof}
    Let $s_i$ be the $i$-th column of $\mathcal{H}(x)$. Suppose $s_{k+1}$ is dependent on $s_1,\dots,s_k$. Then, due to the shift property of Hankel matrices, $s_i$ is dependent on $s_{i-k},\dots,s_{i-1}$ for all $i\geq k+1$. This is equivalent to $\rank{\mathcal{H}(x)}\leq k$. Therefore, $\rank{\mathcal{H}(x)}\leq r$ is equivalent to $s_{r+1}$ depending on $s_1,\dots,s_r$. The latter is further equivalent to $x_{k+r}+\sum_{i=0}^{r-1} a_i x_{k+i}=0$ for all $k\geq 1$.
\end{proof}
\begin{lemma}
    A Hankel matrix $X=\mathcal{H}_m\left(x_{[1,m+n-1]}\right)$ with $\rank{X}\leq r$ can be decomposed as $X=\sum_{i=1}^r \sigma_{z_i}X_{z_i}$ if 1) $X$ is a submatrix of an infinite Hankel matrix $\mathcal{H}(x)$ with $\rank{\mathcal{H}(x)}\leq r$, and 2) the characteristic polynomial $p(q)=q^r+\sum_{i=0}^{r-1} a_i q^i$ of the linear difference equation $x_{k+r}+\sum_{i=0}^{r-1} a_i x_{k+i}=0$ has no repeated roots.
    \label{lm:3}
\end{lemma}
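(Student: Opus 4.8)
The plan is to read off the decomposition from the classical theory of homogeneous linear recurrences, using Lemma~\ref{lm:1} as the bridge. By condition~1) and Lemma~\ref{lm:1}, the scalar sequence $(x_k)_{k=1}^\infty$ underlying $\mathcal{H}(x)$ satisfies the order-$r$ recurrence $x_{k+r}+\sum_{i=0}^{r-1}a_i x_{k+i}=0$ for all $k\geq1$. Under condition~2), the characteristic polynomial $p(q)$ has $r$ distinct roots $z_1,\dots,z_r$, so the geometric sequences $\big(z_i^{k-1}\big)_{k=1}^\infty$ for $i=1,\dots,r$ are linearly independent and span the $r$-dimensional solution space of the recurrence. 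Hence there exist coefficients $\sigma_{z_1},\dots,\sigma_{z_r}\in\mathbb{C}$ with $x_k=\sum_{i=1}^r \sigma_{z_i} z_i^{k-1}$ for every $k\geq1$.

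To make the existence of the $\sigma_{z_i}$ concrete I would note that matching the initial terms $x_1,\dots,x_r$ amounts to solving a Vandermonde system in $z_1,\dots,z_r$, which is nonsingular exactly because the roots are distinct; since both $(x_k)$ and $\big(\sum_i\sigma_{z_i}z_i^{k-1}\big)$ then satisfy the same recurrence and agree on the first $r$ terms, they agree for all $k$ (the convention $0^0=1$, $0^k=0$ for $k\geq1$ makes this valid even if some $z_i=0$). The decomposition of $X$ is then pure bookkeeping on the Hankel structure: for $1\leq a\leq m$ and $1\leq b\leq n$ the $(a,b)$-th entry of $X=\mathcal{H}_m\big(x_{[1,m+n-1]}\big)$ is $x_{a+b-1}=\sum_{i=1}^r\sigma_{z_i}z_i^{a+b-2}=\sum_{i=1}^r\sigma_{z_i}\big(z_i^{a-1}\big)\big(z_i^{b-1}\big)$, which is precisely the $(a,b)$-th entry of $\sum_{i=1}^r\sigma_{z_i}\bar{\mathbf{u}}_{z_i}\bar{\mathbf{v}}_{z_i}^\top=\sum_{i=1}^r\sigma_{z_i}X_{z_i}$. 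This yields the claimed identity $X=\sum_{i=1}^r\sigma_{z_i}X_{z_i}$.

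I do not expect a deep obstacle. The two points that need care are, first, invoking Lemma~\ref{lm:1} on the \emph{infinite} Hankel matrix rather than on the finite block $X$ alone — this is why condition~1) explicitly requires $X$ to be a submatrix of some $\mathcal{H}(x)$ with $\rank{\mathcal{H}(x)}\leq r$, since a rank bound on the finite block would not by itself force a recurrence that governs the whole sequence (cf.\ the counterexample $\begin{bmatrix}0&0\\0&1\end{bmatrix}$ preceding the lemma). Second, one must be explicit that the distinct-roots hypothesis in condition~2) is exactly what guarantees the basis $\{(z_i^{k-1})_k\}_{i=1}^r$ of the solution space and the invertibility of the Vandermonde system; with repeated roots the solution space would also contain polynomial-times-geometric terms $k^\ell z_i^{k}$, which are not of the form $X_z$, and the argument would break.
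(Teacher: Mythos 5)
Your proof is correct and takes essentially the same route as the paper's: invoke Lemma~\ref{lm:1} together with condition~1) to obtain the order-$r$ recurrence for the whole sequence, then use the distinct-roots hypothesis to write $x_k=\sum_{i=1}^r\sigma_{z_i}z_i^{k-1}$ as the general solution, and translate this entrywise into $X=\sum_{i=1}^r\sigma_{z_i}X_{z_i}$. The only difference is that you spell out the Vandermonde argument where the paper simply cites a textbook theorem on linear difference equations.
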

\begin{proof}
    The decomposition $X=\sum_{i=1}^r \sigma_{z_i}X_{z_i}$ is equivalent to $x_k=\sum_{i=1}^r \sigma_{z_i} z_i^{k-1}$ for $k=1,\dots,m+n-1$. From Lemma~\ref{lm:1}, condition 1) implies that $x_{[1,m+n-1]}$ is a subsequence of $\left(x_k\right)_{k=1}^\infty:x_{k+r}+\sum_{i=0}^{r-1} a_i x_{k+i}=0$ for all $k\geq 1$. Let the non-repeated roots of $p(q)$ be $\left(z_i\right)_{i=1}^r$. The general solution of the linear difference equation \cite[Theorem~3.6]{kelley2001difference} is given by $x_k=\sum_{i=1}^r \sigma_{z_i} z_i^{k-1}$, which completes the proof.
\end{proof}

The first condition is naturally satisfied in most problems in systems and control, such as Example~2 in Section~\ref{sec:exam}. The second condition is often assumed in partial fraction expansion analysis, such as \cite{Gragg_1989,Shah_2012}, and repeated poles can be approximated by a combination of close distinct poles with arbitrarily high accuracy.

Since real-valued matrices $X$ are considered, if $X_z$ is in the decomposition, $X_{z^*}$ should also be included with $\sigma_{z^*}=\sigma_z^*$. Let $z=|z|\exp{(j\theta)}$ and $\sigma_z=|\sigma_z|\exp{(j\psi)}$. We have $\sigma_z z+\sigma_z^* z^*=2|\sigma_z||z|\cos{(\psi+\theta)}$. So, the decomposition can be reparametrized with real-valued modes $\sigma_{z_i} X_{z_i}+\sigma_{z_i}^* X_{z_i^*}=2|\sigma_{z_i}|X_{z_i}^{\psi_i}$, where
\begin{equation}
    X_{z_i}^{\psi_i}=\mathcal{H}_m\left(\xi_{[1,m+n-1]}^i\right),\ \xi_k^i=|z_i|^{k-1}\cos{(\psi_i+(k-1)\theta_i)}.
    \label{eq:xzpsi}
\end{equation}
Note that $X_{z_i}^{\psi_i}$ has rank 1 if $z_i$ is real and rank 2 if $z_i$ is complex. This observation provides a real-valued reformulation of \eqref{eq:slra} for $X$ that satisfies the conditions in Lemma~\ref{lm:3} by considering $z_i$ in the closed upper half plane with $\theta_i\in[0,\pi]$.
\begin{proposition}
    Let $\mathbb{M}^{m\times n}$ be the set of Hankel matrices that satisfy conditions 1) and 2) in Lemma~\ref{lm:3}. The optimal solution to \eqref{eq:slra} is given by $\hat{X}=\sum_{i=1}^{n_r} \hat{\sigma}_{z_i}'\hat{X}_{z_i}^{\psi_i}$, where $\left(\hat{X}_{z_i}^{\psi_i},\hat{\sigma}_{z_i}'\right)$ is given by
    \begin{equation}
    \begin{matrix}
    \underset{X_{z_i}^{\psi_i},\sigma_{z_i}'}{\mathrm{arg\,min}}&\norm{Y-\Phi\sum_{i=1}^{n_r} \sigma_{z_i}' X_{z_i}^{\psi_i}}_F^2\\
    \mathrm{s.t.}&\eqref{eq:xzpsi},\ \theta_i\in[0,\pi],\ i=1,\dots,n_r,\\
    &z_i\neq z_j,\ \forall\, i\neq j,\ n_\mathrm{real}+2n_\mathrm{comp}=r,
    \end{matrix}
    \label{eq:hsl}
\end{equation}
where $n_\mathrm{real},n_\mathrm{comp}$ are the number of real and complex $z_i$'s, respectively.
\end{proposition}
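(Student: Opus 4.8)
The plan is to prove that the feasible set of problem~\eqref{eq:hsl}---the collection of matrices $X=\sum_{i=1}^{n_r}\sigma_{z_i}'X_{z_i}^{\psi_i}$ generated by all feasible tuples $(z_i,\psi_i,\sigma_{z_i}')$---coincides exactly with the feasible set of~\eqref{eq:slra} for the present $\mathbb{M}^{m\times n}$, i.e. the set of Hankel matrices satisfying conditions~1) and~2) of Lemma~\ref{lm:3} and having $\rank{X}\leq r$. Once this set equality is established, the two problems minimize the same objective $\norm{Y-\Phi X}_F^2$ over the same set, so their optimizers coincide under the identification $X=\sum_i\sigma_{z_i}'X_{z_i}^{\psi_i}$, and the claimed expression for $\hat{X}$ follows at once. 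The work is therefore entirely in the set equality, which I would prove by two inclusions.

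For ``$\supseteq$'', take $X$ feasible for~\eqref{eq:slra}. By Lemma~\ref{lm:3} it decomposes as $X=\sum_{i=1}^{r}\sigma_{z_i}X_{z_i}$ with pairwise distinct $z_i\in\mathbb{C}$ (the $r$ roots of the real characteristic polynomial) and $\sigma_{z_i}\in\mathbb{C}$. Since that polynomial is real, its root multiset is closed under conjugation; and since $x_k=\sum_i\sigma_{z_i}z_i^{k-1}$ is real for $k=1,\dots,m+n-1$, comparing this identity with its complex conjugate and using that the Vandermonde matrix on the $r$ distinct nodes $z_i$ has full column rank (because $m+n-1>r$, as $r<\bar{n}\leq m$) forces $\sigma_{z_i^*}=\sigma_{z_i}^*$. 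Grouping each real root and each conjugate pair and invoking the reparametrization $\sigma_{z_i}X_{z_i}+\sigma_{z_i}^*X_{z_i^*}=2|\sigma_{z_i}|X_{z_i}^{\psi_i}$ from~\eqref{eq:xzpsi}, with each retained $z_i$ taken in the closed upper half-plane ($\theta_i\in[0,\pi]$), rewrites $X=\sum_{i=1}^{n_r}\sigma_{z_i}'X_{z_i}^{\psi_i}$ with $\sigma_{z_i}'\in\mathbb{R}$, distinct $z_i$, and $n_\mathrm{real}+2n_\mathrm{comp}=r$; this is a feasible point of~\eqref{eq:hsl}.

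For ``$\subseteq$'', take a feasible tuple of~\eqref{eq:hsl} and set $X=\sum_{i=1}^{n_r}\sigma_{z_i}'X_{z_i}^{\psi_i}$. It is Hankel as a sum of Hankel matrices, and since each summand $\sigma_{z_i}'X_{z_i}^{\psi_i}$ has rank at most $1$ for real $z_i$ and at most $2$ for complex $z_i$, subadditivity of rank gives $\rank{X}\leq n_\mathrm{real}+2n_\mathrm{comp}=r$. Extending the underlying sequences $\xi_k^i$ of~\eqref{eq:xzpsi} to all $k\geq1$, the sequence $x=\sum_i\sigma_{z_i}'\xi^i$ satisfies a homogeneous linear difference equation with characteristic polynomial $\prod_{z_i\in\mathbb{R}}(q-z_i)\prod_{z_i\notin\mathbb{R}}(q-z_i)(q-z_i^*)$, which has degree $r$ and---because the $z_i$ are distinct and lie in the closed upper half-plane, so the $z_i^*$ are distinct and lie strictly below the real axis---has no repeated roots. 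By Lemma~\ref{lm:1} the infinite Hankel matrix $\mathcal{H}(x)$ then has rank at most $r$, and $X$ is its top-left $m\times n$ submatrix, so $X$ satisfies conditions~1) and~2) of Lemma~\ref{lm:3}, i.e. $X\in\mathbb{M}^{m\times n}$; together with $\rank{X}\leq r$ this makes $X$ feasible for~\eqref{eq:slra}. The two inclusions give the set equality, and the proposition follows.

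I expect the main obstacle to be making the conjugate-pairing step of ``$\supseteq$'' fully rigorous: showing that the finite data $x_{[1,m+n-1]}$ has a \emph{unique} expansion in the powers $z_i^{k-1}$---which is exactly what the Vandermonde rank bound $m+n-1>r$ provides---so that reality of $X$ indeed forces the symmetry $\sigma_{z_i^*}=\sigma_{z_i}^*$ and lets the complex decomposition collapse onto the real modes $X_{z_i}^{\psi_i}$. The rank and difference-equation bookkeeping in the ``$\subseteq$'' direction is a routine application of Lemmas~\ref{lm:1} and~\ref{lm:3}.
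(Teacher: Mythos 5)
Your proof is correct and follows essentially the same route as the paper: decompose $X$ via Lemma~\ref{lm:3} into complex polynomial modes $\sigma_{z_i}X_{z_i}$ and collapse conjugate pairs onto the real modes $X_{z_i}^{\psi_i}$ of \eqref{eq:xzpsi}. You are in fact more thorough than the paper, whose proof only establishes the forward inclusion and asserts $\sigma_{z^*}=\sigma_z^*$ without justification; your Vandermonde uniqueness argument and the reverse inclusion (showing every feasible point of \eqref{eq:hsl} yields a rank-$\leq r$ Hankel matrix in $\mathbb{M}^{m\times n}$) fill genuine, if minor, gaps.
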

\begin{proof}
    According to Lemma~\ref{lm:3}, $X$ can be decomposed as $X=\sum_{i=1}^r \sigma_{z_i}X_{z_i}$. Without loss of generality, assume that $z_i$ is real for $i=1,\dots,n_\mathrm{real}$ and $z_i=z_{i+n_r-n_\mathrm{real}}^*$, $\theta_i\in(0,\pi)$ for $i=n_\mathrm{real}+1,\dots,n_r$. Then, we have
    \begin{align*}
        X&={\textstyle \sum_{i=1}^r}\,\sigma_{z_i}X_{z_i}={\textstyle \sum_{i=1}^{n_\mathrm{real}}}\,\sigma_{z_i}X_{z_i} + {\textstyle \sum_{i=n_\mathrm{real}+1}^{n_r}}\,\sigma_{z_i}X_{z_i}+\sigma_{z_i}^* X_{z_i^*}\\
        &={\textstyle \sum_{i=1}^{n_\mathrm{real}}}\,|\sigma_{z_i}|X_{z_i}^{\psi_i}+{\textstyle \sum_{i=n_\mathrm{real}+1}^{n_r}}\,2|\sigma_{z_i}|X_{z_i}^{\psi_i}.
    \end{align*}
    Selecting $\sigma_{z_i}'=|\sigma_{z_i}|$ for $i=1,\dots,n_\mathrm{real}$ and $\sigma_{z_i}'=2|\sigma_{z_i}|$ for $i=n_\mathrm{real}+1,\dots,n_r$ completes the proof.
\end{proof}
\begin{remark}
    If we focus on the case where $\left(x_k\right)_{k=1}^\infty$ is bounded, $z_i$ can be further constrained with $|z_i|\leq 1$. This is useful when analyzing stable systems.
    \label{rm:1}
\end{remark}

Then, we are ready to apply the LAR algorithm to \eqref{eq:hsl}. Unfortunately, a closed-form solution like Theorem~\ref{thm:1} does not exist for the Hankel case, but a similar procedure to the proof of Theorem~\ref{thm:1} can be adopted, which is summarized in Algorithm~\ref{al:2}. In detail, line 1 initializes the algorithm by finding the covariate most correlated with the output; line 3 finds the equiangular direction among the active covariates; line 4 calculates the step size before a new covariate is added; line 5 updates the variables with the new covariate.

Algorithm~\ref{al:2} only considers positive correlations without loss of generality, since one can take $X_z^{\psi+\pi}=-X_z^\psi$. Note that $X_z^\psi$ can be decomposed as $X_z^\psi=U_z R_\psi V_z^\top$, where
\begin{align*}
    U_z&=\begin{bmatrix}
        1&\phantom{-}|z|\cos{\theta}&\dots&\phantom{-}|z|^{m-1}\cos{\left((m-1)\theta\right)}\\
        0&-|z|\sin{\theta}&\dots&-|z|^{m-1}\sin{\left((m-1)\theta\right)}
    \end{bmatrix}^\top,\\
    R_\psi&=\begin{bmatrix}
        \cos{\psi}&-\sin{\psi}\\
        \sin{\psi}&\phantom{-}\cos{\psi}
    \end{bmatrix},\\
    V_z&=\begin{bmatrix}
        1&|z|\cos{\theta}&\dots&|z|^{n-1}\cos{\left((n-1)\theta\right)}\\
        0&|z|\sin{\theta}&\dots&|z|^{n-1}\sin{\left((n-1)\theta\right)}
    \end{bmatrix}^\top.
\end{align*}
For computational efficiency, the Frobenius inner product can be calculated as $\left<\Phi X_z^\psi, \Gamma\right>_F=\tr{U_z^\top\Phi^\top\Gamma V_zR_\psi^\top}$. Algorithm~\ref{al:2} requires solving non-convex optimization problems in lines 1 and 4. However, the dimension of the optimization variable is small and does not depend on the problem size. In line 4, $\epsilon>0$ is a small constant that guarantees the selection of a distinct $z_{i+1}$.
\begin{algorithm}[tb]
\caption{Least-angle regression for Hankel matrix regression}
    \begin{algorithmic}[1]
        \State \textbf{Initialization:} active set $\mathcal{A}_1=\left\{\left(z_1,\psi_1\right)\right\}$, where
        \begin{equation*}
            \left(z_1,\psi_1\right)=\underset{z,\psi}{\mathrm{arg\,max}}\ \left<\Phi X_z^\psi,Y\right>_F\big/\norm{\Phi X_z^\psi}_F\ \mathrm{s.t.}\ \theta\in[0,\pi],
        \end{equation*}
        prediction $\hat{Y}_1=\mathbf{0}$, parameter $\hat{\sigma}_{z_1}'^{,1}=0$
        \For{$i=1,2,\dots$}
        \State Equiangular direction: find $Z_i=\sum_{\left(z,\psi\right)\in\mathcal{A}_i}\chi_z^\psi \Phi X_z^\psi$ parameterized by $\left\{\chi_z^\psi\in\mathbb{R}\,|\left(z,\psi\right)\in\mathcal{A}_i\right\}$, such that $\left<\Phi X_z^\psi,Z_i\right>_F=\norm{\Phi X_z^\psi}_F$, for all $\left(z,\psi\right)\in\mathcal{A}_i$.
        \State Next covariate: $\left(z_{i+1},\psi_{i+1},\eta_i\right)$ are obtained by
        \begin{equation*}
            \begin{matrix}
            \underset{z,\psi,\eta}{\mathrm{arg\,min}}&\eta\\
            \mathrm{s.t.}&\frac{\left<\Phi X_z^\psi,\,Y-\hat{Y}_i-\eta Z_i\right>_F}{\norm{\Phi X_z^\psi}_F}=\frac{\left<\Phi X_{z_1}^{\psi_1},\,Y-\hat{Y}_i-\eta Z_i\right>_F}{\norm{\Phi X_{z_1}^{\psi_1}}_F},\\
            &\eta\geq 0,\ |z-z'|\geq \epsilon,\ \forall\left(z',\cdot\right)\in\mathcal{A}_i,\ \theta\in[0,\pi].
            \end{matrix}
        \end{equation*}
        \State $\mathcal{A}_{i+1}=\mathcal{A}_i\cup \left\{\left(z_{i+1},\psi_{i+1}\right)\right\}$, $\hat{Y}_{i+1}=\hat{Y}_i+\eta_iZ_i$, $\hat{\sigma}_{z}'^{,i+1}=\hat{\sigma}_{z}'^{,i}+\eta_i\chi_{z}^{\psi}$ for all $\left(z,\psi\right)\in\mathcal{A}_i$, $\hat{\sigma}_{z_{i+1}}'^{,i+1}=0$
        \EndFor
    \end{algorithmic}
    \label{al:2}
\end{algorithm}

The step size selection of $\eta_i$ in Algorithm~\ref{al:2} is conservative to allow the inclusion of further covariates, which induces a negative bias. When a desired rank is achieved, the estimate can be debiased by solving a least-squares problem:
\begin{equation}
    \begin{matrix}
    \underset{\sigma_{z}'\in\mathbb{R}^{n_r}}{\mathrm{min}}\ \norm{Y-\Phi\sum_{i=1}^{n_r} \sigma_{z_i}' X_{z_i}^{\psi_i}}_F^2
    \end{matrix}
    \label{eq:debias}
\end{equation}
where $z_i$ and $\psi_i$ are given by Algorithm~\ref{al:2}.

\section{NUMERICAL EXAMPLES}
This section demonstrates the performance of the LAR solutions \eqref{eq:xlar} and Algorithm~\ref{al:2} and compares them with the nuclear norm solution \eqref{eq:nuc}, for both unstructured and Hankel matrix regression.\footnote{The codes are available at https://doi.org/10.25835/dx2jik1l.} In both examples, 120 Monte Carlo simulations are conducted with the error $e_k$ being zero-mean and i.i.d. Gaussian. The convex programs are solved by YALMIP and Mosek, whereas the non-convex programs are solved by the \textsc{Matlab} function \texttt{fmincon} with the interior point method. The estimation accuracy is evaluated by $\norm{\hat{X}-X_0}_F^2$, where $X_0$ is the true low-rank matrix.

For unstructured matrix regression, Example~1 in Section~\ref{sec:exam} is considered with $m=n=40$, $p=80$, and $r=10$. The rank-$r$ transition matrix $B$ is generated by $B=b_0B_1B_2^\top$, where $B_1,B_2\in\mathbb{R}^{n\times r}$ contain unit i.i.d. Gaussian entries and $b_0$ is selected such that the spectral radius of $B$ is 0.95. The standard deviation of $e_k$ is 0.01. We also compare a two-step approach of first estimating $\hat{X}_\mathrm{LS}$ by least squares and subsequently applying truncated SVD of rank $r$. This approach is referred to as \textit{LS-TSVD}. To obtain a specific rank $r$, the nuclear norm estimate $\hat{X}_\mathrm{nuc}$ is calculated on a 20-point grid of $\lambda$, logarithmically spaced between 0.01 and 0.1. If multiple solutions of rank $r$ exist, the one with the smallest $\lambda$ value is selected.

The boxplots of estimation errors and computation time are illustrated in Fig.~\ref{fig:3}. It can be seen that the LAR solution gives closer estimates compared to the other two methods, with median prediction error reductions of 28\% and 38\% compared to nuclear norm regularization and LS-TSVD, respectively. 
The computation time of the LAR solution is similar to \textit{LS-TSVD} and significantly shorter than nuclear norm regularization.
\begin{figure}[tb]
    \centering
    \includegraphics[scale=0.93]{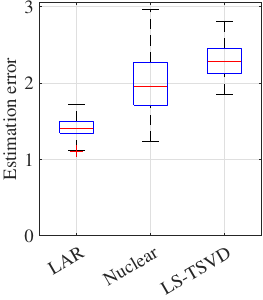}
    \includegraphics[scale=0.93]{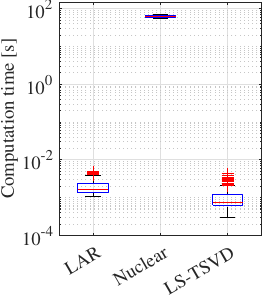}
    \caption{Comparison of estimation errors and computation time for unstructured matrix regression. LAR: closed-form solution \eqref{eq:xlar}, Nuclear: nuclear norm regularization \eqref{eq:nuc}, LS-TSVD: successive least-squares estimate and truncated SVD.}
    \label{fig:3}
\end{figure}

For Hankel matrix regression, Example~2 in Section~\ref{sec:exam} is considered with a sixth-order ($r=6$) system:
\begin{equation}
    G(q)=d_0\sum_{i=1}^{n_p} \left(\frac{d_i}{q-q_i}+\frac{d_i^*}{q-q_i^*}\right),
    \label{eq:model}
\end{equation}
where $n_p=3$, $q_1=-0.6+0.6j$, $q_2=0.9+0.2j$, $q_3=0.2+0.9j$, $d_1=1$, $d_2=1-2j$, $d_3=-1-j$, and $d_0$ is selected such that $G(q)$ has an $\mathcal{H}_2$-norm of 1. The impulse response of $G(q)$ is given by $g_k=d_0\sum_{i=1}^{n_p}\left(d_i q_i^{k-1}+d_i^* \left(q_i^*\right)^{k-1}\right)$, so $q_i$ and the angle of $d_i$ correspond to the optimal choices of $z_i$ and $\psi_i$, respectively.
The following parameters are used in simulations: $m=80$, $n=20$, $\epsilon=0.01$. Two noise levels are considered with a standard deviation of 0.01 and 0.1 for $e_k$, respectively. The nuclear norm estimates are obtained similarly to the previous example, except that the $\lambda$-grid is selected between 0.1 and 1. We also compare two other algorithms for structured low-rank matrix regression in existing works, namely Cadzow's algorithm \cite{Cadzow_1988} (\textit{Cadzow}) and the SLRA package \cite{Markovsky_2014} (\textit{SLRA}). It is assumed that the system is known to be stable, so $z_i$ is constrained by $|z_i|\leq 1$ in Algorithm~\ref{al:2} as discussed in Remark~\ref{rm:1}. Note that this stability constraint cannot be guaranteed for the other algorithms.

Table~\ref{tbl:1} shows the estimated values of $z_i$, $\psi_i$ from Algorithm~\ref{al:2} against their optimal values from the true model \eqref{eq:model}. The close match of the values validates the effectiveness of the LAR algorithm. The boxplot of estimation errors is illustrated in Fig.~\ref{fig:2}. The average computation times are 1.813$\,$s, 1.906$\,$s, 55.529$\,$s, 0.153$\,$s, and 14.594$\,$s, for the \textit{LAR}, \textit{LAR-LS}, \textit{Nuclear}, \textit{Cadzow}, and \textit{SLRA} algorithms, respectively. The results demonstrate that the LAR algorithm with the least-squares debiasing \eqref{eq:debias} obtains closer estimates with significantly shorter computation time compared to nuclear norm regularization. The reductions in median estimation errors are 49\% and 70\% for LAR and LAR-LS, respectively. This proves the advantages of using the LAR algorithm over directly solving the nuclear norm regularization problem. Compared to \textit{Cadzow} and \textit{SLRA}, the proposed LAR algorithm obtains smaller estimation errors under the higher noise level. The computational efficiency of \textit{LAR} is also higher compared to \textit{SLRA}.
\begin{table}[tb]
    \renewcommand{\arraystretch}{1.2}
    \centering
    \caption{Comparison of $z_i$ and $\psi_i$ against their optimal values}
    \vspace{-0.5em}
    
    \begin{tabularx}{\columnwidth}{ccccc}
    \hline\hline
    \multicolumn{2}{c}{\textbf{Poles}} & \textbf{1} & \textbf{2} & \textbf{3}\\\hline
    \multirow{2}{*}{$|z_i|$} & \textbf{Estimated} & $0.865\pm0.012$ & $\phantom{-}0.921\pm0.002$ & $\phantom{-}0.923\pm0.004$\\
     & \textbf{True} & $0.849$ & $\phantom{-}0.922$ & $\phantom{-}0.922$ \\\hline
    \multirow{2}{*}{$\theta_i$} & \textbf{Estimated} & $2.339\pm0.023$ & $\phantom{-}0.214\pm0.003$ & $\phantom{-}1.354\pm0.004$ \\
     & \textbf{True} & $2.356$ & $\phantom{-}0.219$ & $\phantom{-}1.352$ \\\hline
    \multirow{2}{*}{$\psi_i$} & \textbf{Estimated} & $0.153\pm0.115$ & $-1.030\pm0.031$ & $-2.390\pm0.037$ \\
     & \textbf{True} & 0 & $-1.107$ & $-2.356$ \\\hline\hline
    \end{tabularx}
    \label{tbl:1}
\end{table}
\begin{figure}[tb]
    \centering
    \includegraphics[scale=0.93]{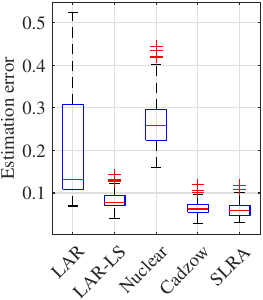}
    \includegraphics[scale=0.93]{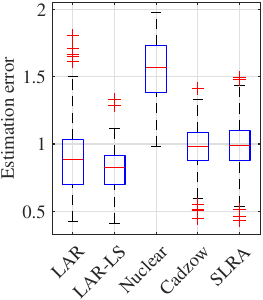}
    \footnotesize (a) Noise level: 0.01\qquad\qquad\qquad\quad (b) Noise level: 0.1
    \caption{Comparison of estimation errors for Hankel matrix regression. LAR: Algorithm~\ref{al:2}, LAR-LS: Algorithm~\ref{al:2} plus least-squares debiasing \eqref{eq:debias}, Nuclear: nuclear norm regularization \eqref{eq:nuc}, Cadzow: Cadzow's algorithm \cite{Cadzow_1988}, SLRA: the SLRA package \cite{Markovsky_2014}.}
    \label{fig:2}
\end{figure}

\section{CONCLUSIONS}

In this work, the low-rank matrix regression problem is reformulated as an infinite-dimensional sparse learning problem and solved by least-angle regression (LAR). When the matrix is unstructured, the LAR algorithm admits a closed-form solution, only requiring two singular value decomposition operations. For Hankel matrices, LAR is implemented by considering a real-valued reformulation of polynomial bases. Significant improvements in both accuracy and efficiency are observed numerically for both cases, compared to nuclear norm regularization. Further works include extending the procedure to other matrix structures and further improving the computational efficiency of infinite-dimensional LAR algorithms.

The results in this work prove that LAR can provide more efficient and effective algorithms than directly solving the nuclear norm regularization problem.

\bibliographystyle{IEEEtran}
\bibliography{refs}

\end{document}